\definecolor{menucolor}{rgb}{0.1,0.52,0.47}
\definecolor{urlcolor}{rgb}{0.85,0.37,0.01}
\definecolor{runcolor}{rgb}{0.46,0.44,0.701}
\definecolor{filecolor}{rgb}{0.2,0.5,0.01}
\definecolor{linkcolor}{rgb}{0.12,0.47,0.70}
\definecolor{citecolor}{rgb}{0.55,0.36,0.01}
\definecolor{anchorcolor}{rgb}{0.4,0.4,0.4}
\newtheorem{thm}{Theorem}
\newtheorem{prop}[thm]{Proposition}
\providecommand{\TR}{{\textrm{TR}}}
\providecommand{\TE}{{\textrm{TE}}}
\newcommand{\T}{{\textrm{T}}}
\newcommand{\mO}{\mathcal {O}}
\newcommand{\bd}{\boldsymbol{d}}
\newcommand{\bh}{\boldsymbol{h}}
\newcommand{\bLambda}{\boldsymbol{\Lambda}}
\newcommand{\bbeta}{\boldsymbol{\beta}}
\newcommand{\bPsi}{\boldsymbol{\Psi}}
\newcommand{\bnu}{\boldsymbol{\nu}}
\newcommand{\bOmega}{\boldsymbol{\Omega}}
\newcommand{\bc}{\boldsymbol{c}}
\newcommand{\bD}{\boldsymbol{D}}
\newcommand{\bH}{\boldsymbol{H}}
\newcommand{\bI}{\boldsymbol{I}}
\newcommand{\bJ}{\boldsymbol{J}}
\newcommand{\bS}{\boldsymbol{S}}
\newcommand{\bR}{\boldsymbol{R}}
\newcommand{\bW}{\boldsymbol{W}}
\newcommand{\bv}{\boldsymbol{v}}
\newcommand{\bx}{\boldsymbol{x}}
\newcommand{\bw}{\boldsymbol{w}}
\newcommand{\bzero}{\boldsymbol{0}}
\newcommand{\SE}{\textnormal{SE}}
\newcommand{\E}{{\mathbb{E}}}
\newcommand{\Var}{{\mathbb{V}{\rm ar}}}
\newcommand{\bit}{\begin{itemize}}
\newcommand{\eit}{\end{itemize}}
\newcommand{\ben}{\begin{enumerate}}
\newcommand{\een}{\end{enumerate}}
\newcommand{\beqn}{\begin{equation}}
\newcommand{\eeqn}{\end{equation}}
\newcommand{\bea}{\begin{eqnarray*}}
\newcommand{\eea}{\end{eqnarray*}}
\newcommand{\bpf}{\begin{proof}}
\newcommand{\epf}{\end{proof}\ms}
\newcommand{\ms}{\medskip}
\newcommand{\citep}{\cite}
\newcommand{\citet}{\cite}
\newcommand{\citeyear}{\cite}
\newcommand{\citeauthor}{\cite}
\newcommand{\A}{\mathbb{A}}
\newcommand{\I}{\mathbb{I}}
\renewcommand{\L}{\mathbb{L}}
\newcommand{\Tr}{\mathrm{Tr}~}
\def\shalf{\mbox{{\footnotesize$\frac{1}{2}$}}}
\newcommand{\W}{{\bm{W}}}
\newcommand{\1}{\bm{1}}
\newcommand{\D}{\bm{D}}
\definecolor{SkyBlue}{rgb}{0.9,0.975,1}
\begin{document}
\title{Fast matrix-free methods for model-based personalized synthetic  MR imaging} 
\author{Subrata~Pal,~Somak~Dutta~and~Ranjan~Maitra
}



\IEEEcompsoctitleabstractindextext{
  \begin{abstract}
  Synthetic Magnetic Resonance (MR) imaging predicts images at new
  design parameter settings from a few observed MR scans.
  Model-based methods, that use
  both the physical and statistical properties underlying the MR
  signal and its acquisition, can predict images at any setting from as
  few as three scans, allowing it to be used in individualized patient- and
  anatomy-specific contexts. However, the estimation problem in
  model-based synthetic MR imaging is ill-posed and so regularization,
  in the form of  correlated Gaussian 
  Markov Random Fields, is imposed on the voxel-wise spin-lattice
  relaxation time, spin-spin relaxation time and the proton density
  underlying the MR image. We develop theoretically sound but
  computationally practical matrix-free estimation methods for synthetic
  MR imaging. Our evaluations demonstrate superior performance of our
  methods in currently-used clinical settings when compared to existing model-based
  and deep learning methods. Moreover, unlike deep learning approaches,
  our fast methodology can   synthesize needed images during patient visits,
  with  good estimation and prediction accuracy and consistency.  An added
  strength of our model-based approach, also developed and illustrated
  here, is the accurate   estimation of standard errors  of regional
  contrasts in the synthesized images. A R package {\tt symr} implements
  our methodology.
\end{abstract}



\begin{IEEEkeywords}
  Alternating Expectation Conditional Maximization algorithm, Bloch transform,  deep image prior, Lanczos algorithm,  multi-layered Gaussian Markov Random Field,  
  profile likelihood,  variance estimation.
\end{IEEEkeywords}
}
\maketitle

\section{Introduction} 
\label{sec:intro}
Magnetic Resonance~(MR) Imaging (MRI)~\citep{katims82,
  mansfieldandmorris82,hinshawandlent83} is a noninvasive 
radiologic tool for understanding tissue structure and 
function.  In MRI, each tissue type has a distinctive  spin-lattice 
or longitudinal  relaxation time ($\T_1$), spin-spin or transverse
relaxation time ($\T_2$), and proton density ($\rho$). 
These quantities are observed only indirectly through  
noise-contaminated measurements of their transformations that
modulate their effect through user-controlled design parameters such as
echo time (TE), repetition time (TR) or flip angle ($\alpha$). 
The exact way in which $\rho$, $\T_1$ and $\T_2$ combine with the
design parameters to form an MR image depends on the imaging sequence
used~\citep{kuperman00,bernsteinetal04,brownetal14}. For example, in
spin-echo MRI (where $\alpha=\!90^\circ\!$), 
the noiseless  MR intensity $\nu_{ij}$
at the $i$th voxel
and $j$th design parameter setting $(\mbox{TE}_j, \mbox{TR}_j)$ is approximately
specified by the Bloch equation
\begin{equation}
	\nu_{ij} = \rho_i \exp\left(-\frac{\TE_j}{\T_{2i}}\right)\left\{1 -
	\exp\left(-\frac{\TR_j}{\T_{1i}}\right)\right\}.
	\label{eq:bloch}
\end{equation}
In reality, the observed MR signal is noise-contaminated and 
complex-valued, but 
its magnitude ($r_{ij}$, at the $i$th
voxel and $j$th design parameter setting) is commonly stored, and is well-modeled~ by a Rice distribution~\citep{henkelman85, 
  sijbers98, sijbersetal98,rice44, rice45} with density
\begin{equation}
	\varrho(r_{ij}; \sigma_j, \nu_{ij}) = \frac{r_{ij}}{\sigma_j^2} \exp\left(-\frac{r_{ij}^2 + \nu_{ij}^2}{2\sigma_j^2}\right)
	\I_0\left(\frac{r_{ij}\nu_{ij}}{\sigma_j^2}\right),
	\label{eq:rice}
\end{equation}
for $r_{ij}>0$. Here $\sigma>0$ is the scale parameter of the Rice
density and $\I_k(\cdot)$ is the modified Bessel function of the first
kind of order $k$. The distribution of $r_{ij}$ follows from the
approximate distribution~\citep{wangandlei94} of the complex-valued MR signal as
bivariate Gaussian with mean $(\nu_{ij}\cos\eta,\nu_{ij}\sin\eta)$ and homogeneous diagonal dispersion matrix $\sigma^2\bI_2$, for some $\eta$. For identifiability and convenience,  we use $\eta=0$~\citep{maitraandfaden09,maitraandriddles10,maitra13}.

For a spin-echo MR imaging sequence~\citep{hahn50,hennigetal86},
different $(\TE,\TR)$ values modulate $\T_1$, $\T_2$ and $\rho$
and can be used to enhance the clinical distinction between tissues. 
However, the optimal $(\TE, \TR)$-settings are patient- and/or
anatomy-specific and not always known in advance.
Further, patient discomfort or technological
challenges can preclude the acquisition of images at some
settings~\citep{deonietal05,maitraandriddles10}. These shortcomings
were sought to be overcome using {\em synthetic MRI}
(syMRI)~\citep{bobmanetal85, 
  bobmanetal86, feinbergetal85, ortendahletal84b,gonccalvesetal18},
where the underlying $\rho,\T_1,$ and $\T_2$ values at each voxel are
estimated using a set of {\em training} images, and then inserted along 
with unobserved $(\TE,\TR,\alpha)$-values into~\eqref{eq:bloch} to
obtain corresponding synthetic images. The technique is an 
appealing alternative 
in situations requiring images at  multiple settings and/or long
acquisition times, such as in pediatric imaging of the developing
brain or in non-cooperating subjects~\citep{bettsetal16, mcallisteretal17, 
  andicaetal19,hagiwaraetal17b}.
However, despite its vast potential, syMRI's clinical adaptation has
been stymied by the inherent ill-posedness of the inversion of the Bloch
equation, which leads to unstable
solutions~\citep{maitraandriddles10}, or by the inordinate times taken
by palliative measures such as regularization using Markov Random
Field (MRF) priors~\citep{gladandsebastiani95,maitraandbesag98}. Deep
learning (DL) methods were developed \citep{hagiwaraetal19}, but they
require substantially large training images that are impractical in a
clinical single-subject personalized setting. Recently,
\citet{paletal22} developed a DIPsynMRI framework based on a ``deep
image prior'' \citep{ulyanovetal18} that works well in personalized
settings with only three training images. However, besides being
computationally slow, these DL methods do not easily provide any measure of prediction uncertainty.
An alternative model-based approach proposed by~\citet{maitraandriddles10}, while
practical, used the one-step late Expectation-Maximization (EM),
or OSL-EM algorithm~\citet{green90} that is not guaranteed to
converge, and, as shown in this paper, can provide sub-optimal estimates. 


In this paper, we  
instead develop (Section~\ref{sec:meth}) a 
scalable  alternating expectation conditional maximization (AECM)
algorithm~\citep{mengandvandyk97} for synthetic MRI that is guaranteed
to converge and essentially matches the speed of non-regularized
methods such as LS. The key ingredients of our AECM algorithm are
a checkerboard coloring scheme~\citep{besag74,besagetal95} that allows
fast parallel block maximization over the voxel-wise
parameters and  analytically profiles out some of the Gaussian MRF (GMRF) parameters,
reducing the optimization over the remaining parameters to be of at most two
variables. An additional appealing aspect of our model-based approach
is the conceptual ease of obtaining standard errors (SEs) of
our predictions, for which we also provide methods for their fast
implementation. 
Section~\ref{sec:sim} evaluates properties of our estimates, and demonstrates our methodology on a normal subject
in a clinical setting. We also investigate the selection of test sets
for optimizing predictive performance and demonstrate consistency of
our synthetic MRI predictions. A separate set of simulation
experiments evaluates estimation and prediction accuracy and
consistency for different noise levels.  
We conclude with some discussion.
An online supplement with sections,
figures and tables prefixed by `S', is available. 
Our methods 
are implemented in a 
R~\citep{R} package 
\textsf{symr} (pronounced $sim\!\cdot\!mer$) 
that uses openMP-based parallelization for computational practicality. 



\section{Methodology}\label{sec:meth}
\subsection{A penalized likelihood model for syMRI}
Following \citet{maitraandriddles10}, the observed
image intensities $\bR = \{r_{ij}\}$ are assumed to be  
independent Rice-distributed random variables with
density~\eqref{eq:rice}. We  get the loglikelihood 
\begin{equation}
	\ell(\{\T_{1i},\T_{2i},\rho_i\}_{i=1}^{n};\bR) =
	\sum_{i=1}^n \sum_{j=1}^m \log\varrho(r_{ij}; \nu_{ij}, \sigma_j),
	\label{llhd}
\end{equation}
where the voxels $i=1,2,\ldots,n$ are embedded in a regular 
3D array of $n_x\!\times\! n_y\!\times\! n_z$ voxels. The noise parameter $\sigma_j^2$ in
each training image is estimated using~\citet{maitra13}. As in
\citet{maitraandriddles10}, we spatially regularize
$(\rho,\T_{1},\T_{2})$ by the voxel-wise transformations $W_{i1} = \rho_i$,
$W_{i2} = \exp\left(-1/\T_{1i}\right)$, and $W_{i3} =
\exp\left(-1/\T_{2i}  \right)$ and penalizing
~\eqref{llhd} to obtain the optimization problem 
\begin{equation}\label{eq:penalizedproblem}
\max_{\W} \left\{\ell(\W;R) - \shalf \Tr\bPsi^{-1}\W'\bLambda\W\right\}. 
\end{equation}
In~\eqref{eq:penalizedproblem}, $\W = (W_{ik})$ is a $n\!\times\!3$ matrix, $\bPsi$ is a $3\!\times\!3$ positive definite matrix and $\bLambda = \beta_z \bJ_{n_z}\! \otimes\! \bI_{n_y}\! \otimes\! \bI_{n_x}\! +\! \beta_y \bI_{n_z}\! \otimes\! \bJ_{n_y}\! \otimes\! \bI_{n_x}\! +\! \beta_x \bI_{n_z}\! \otimes\! \bI_{n_y}\! \otimes\! \bJ_{n_x},$
where $\beta_x$, $\beta_y$, and $\beta_z$ are nonnegative parameters,
$\bI_{k}$ is the identity matrix of order $k,$ and $\bJ_k$ is the $k\!\times\!k$ tri-diagonal matrix with all nonzero off-diagonals as $-1$, the first
and the last diagonal entries both as $1$ and the other 
diagonal entries all as $2$. For all practical purpose, $n\!=\!n_x n_y
n_z$ is large, and $\bLambda$ is an enormous but sparse matrix,
with each row having at most seven non-zero elements, corresponding to the
two neighbors in each direction and the voxel itself. 
The penalty in~\eqref{eq:penalizedproblem} can be written as the
kernel of  the (improper) matrix normal density 
\begin{equation}
	f(\W; \bPsi, \bbeta) =
	\frac{|\bLambda|_+^{3/2}\exp\left\{-\shalf\Tr\left(\bPsi^{-1}\W'\bLambda \W\right)\right\}}{(2\pi)^{3n/2}|\bPsi|^{n/2}},
	\label{matrixnormal}
\end{equation}
where $|\bLambda|_+$ is the product of the positive eigenvalues of
$\bLambda$. The Kronecker structure in the components of $\bLambda$
means that $\bPsi$ and $\bbeta=(\beta_x,\beta_y,\beta_z)$ need constraints for
identifiability. Though there are many options, we
impose the constraint that $2\beta_x+2\beta_y+2\beta_z\! =\! 1,$ from the alternative specification of \eqref{matrixnormal} as the
(improper) density of a first order intrinsic multi-layer GMRF~\citep{besag74,besagandkooperberg95,duttaandmondal15}. 
If $\W_i'$ is the $i$th row (corresponding to the $i$th interior
voxel) of $\W,$ then the conditional distribution of $\W_i$ given
$\bW_{-i}$ (all other rows of $\W$)  is trivariate normal with  conditional mean and variance
\begin{equation}\label{eq:cond_mean}
	\E(\W_i | \bW_{-i}) =  -\sum_{q \heartsuit i} \Lambda_{iq} \W_q,\qquad\qquad \Var(\W_i|\bW_{-i}) = \bPsi,
\end{equation}
where $q\heartsuit i$ if and only if voxel $q$ is a neighbor of voxel
$i$, and $\Lambda_{iq}$ is the $(i,q)$th element of $\bLambda.$
Similar expressions exist for boundary voxels. Further, 
$\Lambda_{iq} = -\beta_x,$  $-\beta_y$ or $-\beta_z$ if $q\heartsuit
i$ in the $x$-, $y$- or $z$-direction, so~\eqref{eq:cond_mean}
characterizes $\E(\bW_i|\bW_{-i})$ as the weighted average of its
neighboring $\W_q$ values with weights $\beta_x,\beta_y$ or
$\beta_z$. The conditional covariance is constant. 
\subsection{Matrix-free AECM for parameter estimation}
\subsubsection{Background}\label{sec:bg}
Analytical or numerical optimization in~\eqref{eq:penalizedproblem}
is well-nigh impractical because of the intractability
introduced by the penalty term. A EM algorithm can conceptually be
developed from the generative model of the Rice distribution of the
individual $r_{ij}$s.  
Specifically, with $r_{ij}$ as the observed magnitude, we let
$\theta_{ij}$ be the (missing) phase angle, then
$(r_{ij}\cos\theta_{ij}, r_{ij}\sin\theta_{ij})$ is bivariate normally
distributed with mean $(\nu_{ij}, 0)$ and variance-covariance matrix
$\sigma_j^2\bI_2$. Then, ignoring terms not involving $\nu_{ij}$, the
complete log-likelihood is $ \sum_{i=1}^n \sum_{j=1}^m \sigma_j^{-2}
(r_{ij}\nu_{ij}\cos\theta_{ij} - \nu_{ij}^2/2)$. The algorithm
iteratively  computes, in the expectation step (E-step), the
expectation of the complete loglikelihood given the observations, and 
evaluated at the current parameter values, while the maximization step
(M-step) maximizes
the result. We now list the two steps.
\paragraph{E-step}  The E-step requires the conditional expectation of
each $\cos\theta_{ij}$ given $r_{ij}$ at the current values 
($\W^{(t)},\bPsi^{(t)},\bbeta^{(t)}$) of ($\W,\bPsi, \bbeta$).
We compute and define
$$Z_{ij}^{(t)} := \E[\cos\theta_{ij}|r_{ij}; \nu_{ij}^{(t)}] =  \A_1\left(r_{ij}\nu_{ij}^{(t)}/\sigma_j^2\right),$$
where $\A_1(x) := \I_1(x)/\I_0(x).$
Here, $Z_{ij}^{(t)}$ is free of $\bbeta^{(t)}$ or $\bPsi^{(t)}$, but rather, depends only 
on $\nu_{ij}^{(t)}$ which only involves the $i$th row of $\W^{(t)}$. So $Z_{ij}^{(t)}$s  
can be computed in parallel without any data racing by allocating an instance of $\sigma_j^2$ to every processor thread.

\paragraph{M-Step} At the $(t\!+\!1)$th iteration, updates $(\bW^{(t+1)},
\bPsi^{(t+1)}, \bbeta^{(t+1)})$ are obtained by maximizing 
\begin{equation}
Q^*(\bW, \bPsi, \bbeta;\bW^{(t)}) = Q(\bW;\bW^{(t)}) + \log f(\bW;
\bPsi, \bLambda)
\end{equation}
with respect to (w.r.t.) $(\bW,\bPsi,\bbeta),$ 
with
\begin{equation}
	Q\left(\W;\W^{(t)}\right) = \sum_{i=1}^n \sum_{j=1}^m \sigma_j^{-2}\left( -\shalf\nu_{ij}^2 + r_{ij}\nu_{ij}Z_{ij}^{(t)}\right).
	\label{eq:Q_fn}
      \end{equation}
      In \eqref{eq:Q_fn}, $\nu_{ij}$ is again a function only of the $i$th
row of $\W.$ However, optimization of $Q^*$ is still challenging 
so \citet{maitraandriddles10} employed an OSL-EM
algorithm \citep{green90}. Specifically, they solved $\nabla
Q(\W|\W^{(t)}) + \nabla \log f(\W^{(t)};\bPsi,\bbeta) = 0$ in
$(\bW,\bPsi\bbeta)$. This equation can be decomposed  into those involving
only the individual rows of $\W$, greatly simplifying computation. 
(The equations for $\bPsi$ and
$\bbeta$ are still joint.) However, unlike the EM algorithm, 
OSL-EM is neither guaranteed to converge~\citep{maitraandriddles10}, nor increase the penalized
log-likelihood monotonically at each iteration (see Figure~\ref{fig:likeliplot} for an illustration where OSL-EM actually
reduces the penalized log-likelihood providing a sub-optimal
estimate). So, we describe and implement an AECM algorithm that is guaranteed to converge monotonically to a local maximum.
\subsubsection{Development and implementation}
An AECM algorithm~\citep{mengandvandyk97} partitions the parameter
space and splits the M-step into conditional maximization (CM)
steps, one for each element in the above partition, and alternates the
E-step with each CM step. In our framework, we exploit the
GMRF structure underlying $\bW$ to  color the
voxels as black or white using a checkerboard
pattern~\citep{besag74} so that no two neighboring voxels have the same color. Then
our parameter space is partitioned into three sets, with two partitions
for the $\bW$s at the black and white voxels
and the third partition containing
$(\bPsi,\bbeta)$. Then each partition is updated in its CM-step while
keeping the variables in the other partitions fixed at their current values.

Our bifurcation of $\bW$ is driven by the fact that
the CM-step for $\bW$  at each black (or white) voxel
involves fixed values from only the neighboring white (correspondingly
black) voxels. So the optimization problems over the black
(white) voxels are {\em embarrassingly} parallel and are
efficiently implemented with no data races in computer memory.
Our race-negligible parallel optimization is explicitly given by 
\begin{prop}\label{prop:optimization_W_i}
Maximizing
$Q^*(\bW,\bPsi,\bbeta;\bW^{(t)})$ w.r.t. $\W_i$ for fixed
$(\bPsi,\bbeta)$ and 
$\bW_{-i}$ is equivalent to maximizing
\begin{equation}\label{eq:Q_star_i}
  Q^*_i(\W_i) = \sum_{j=1}^m \frac1{\sigma_j^{2}}\left( -\nu_{ij}^2/2 + r_{ij}\nu_{ij}Z_{ij}^{(t)}\right) - \sum_{q \heartsuit i} \Lambda_{i,q} \W_q' \bPsi^{-1}\W_i -\shalf \Lambda_{i,i} \W_i' \bPsi^{-1} \W_i,
              \end{equation}
	with gradient vector w.r.t. $\W_{i}$ given by
	\begin{equation}\label{eq:grad_Q_star_i}
            \nabla Q^*_i(\W_i)
            =  \sum\limits_{j=1}^m \frac1{\sigma_j^{2}}\left(
		- \nu_{ij} +
		r_{ij}Z_{ij}^{(t)}\right)\nabla \nu_{i,j}
-\sum\limits_{q\heartsuit i} \Lambda_{i,q} \bPsi^{-1}W_q  - \Lambda_{i,i} \bPsi^{-1}\W_i.
\end{equation}
Here, $\nabla \nu_{ij}$ is the gradient vector of $\nu_{ij}$ w.r.t. $\W_i$.
\end{prop}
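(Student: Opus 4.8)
The plan is to read off $Q^*_i$ from $Q^*(\bW,\bPsi,\bbeta;\bW^{(t)}) = Q(\bW;\bW^{(t)}) + \log f(\bW;\bPsi,\bLambda)$ by discarding every term that does not involve $\W_i$, since $(\bPsi,\bbeta)$ and $\bW_{-i}$ are frozen. In $Q(\bW;\bW^{(t)})$ of~\eqref{eq:Q_fn}, each $\nu_{ij}$ is a function of the $i$th row of $\bW$ only, so among the double sum the sole $\W_i$-dependent summands are those in the inner sum over $j$ attached to this particular $i$; everything else is constant in $\W_i$. This recovers the first sum in~\eqref{eq:Q_star_i}.

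For the penalty, the first step is to rewrite the matrix trace in~\eqref{matrixnormal} as a sum over pairs of voxels. Using the symmetry of $\bPsi^{-1}$ one obtains $\Tr(\bPsi^{-1}\W'\bLambda\W) = \sum_{p,q}\Lambda_{pq}\,\W_p'\bPsi^{-1}\W_q$. The $\W_i$-dependent part of this expression is the diagonal contribution $p=q=i$ together with the two off-diagonal families $\{p=i,\,q\neq i\}$ and $\{p\neq i,\,q=i\}$; by symmetry of $\bLambda$ and of $\bPsi^{-1}$ these two families coincide, and since $\Lambda_{iq}=0$ unless $q=i$ or $q\heartsuit i$, the off-diagonal contribution reduces to a sum over $q\heartsuit i$. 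Multiplying by $-\shalf$ gives precisely $-\sum_{q\heartsuit i}\Lambda_{iq}\W_q'\bPsi^{-1}\W_i - \shalf\Lambda_{ii}\W_i'\bPsi^{-1}\W_i$, the last two terms of~\eqref{eq:Q_star_i}; the normalizing constant and the $|\bLambda|_+^{3/2}/|\bPsi|^{n/2}$ factor in $\log f$ carry no $\bW$ and drop out. Adding the two pieces shows that, as a function of $\W_i$ with the other arguments fixed, $Q^*$ equals $Q^*_i(\W_i)$ up to an additive constant, which establishes the claimed equivalence; nothing in this argument uses the interior-voxel neighbor pattern, so it applies verbatim to boundary voxels as well.

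The gradient~\eqref{eq:grad_Q_star_i} then follows by differentiating~\eqref{eq:Q_star_i} term by term: the chain rule on $-\nu_{ij}^2/2 + r_{ij}\nu_{ij}Z_{ij}^{(t)}$, with $Z_{ij}^{(t)}$ a fixed quantity from the E-step, yields $(-\nu_{ij}+r_{ij}Z_{ij}^{(t)})\nabla\nu_{ij}$; the bilinear form $\W_q'\bPsi^{-1}\W_i$ is linear in $\W_i$ with gradient $\bPsi^{-1}\W_q$ (again by symmetry of $\bPsi^{-1}$); and the quadratic form $\shalf\W_i'\bPsi^{-1}\W_i$ has gradient $\bPsi^{-1}\W_i$. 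Summing these reproduces~\eqref{eq:grad_Q_star_i}.

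I do not anticipate a genuine obstacle: the whole content is the trace/Kronecker bookkeeping that isolates the $\W_i$-dependent terms, plus the standard matrix-calculus identities for linear and quadratic forms. The one spot that deserves a careful line is the factor of $2$ picked up by the cross terms when the symmetric matrix trace is expanded into a per-voxel sum, and verifying that it cancels the $\shalf$ in~\eqref{eq:penalizedproblem} so that the coefficient on $\W_q'\bPsi^{-1}\W_i$ is $\Lambda_{iq}$ while that on $\W_i'\bPsi^{-1}\W_i$ remains $\shalf\Lambda_{ii}$.
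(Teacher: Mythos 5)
Your proposal is correct and follows essentially the same route as the paper's own proof: expand the trace penalty as $\sum_{p,q}\Lambda_{pq}\,\W_q'\bPsi^{-1}\W_p$, isolate the terms touching voxel $i$ (with the cross-term factor of $2$ cancelling the $\shalf$, which you rightly flag), note that $\nu_{ij}$ depends only on row $i$ of $\W$, and differentiate term by term. The only difference is cosmetic: the paper additionally writes out the explicit partial derivatives of $\nu_{ij}$ from the transformed Bloch equation, but the proposition leaves $\nabla\nu_{ij}$ abstract, so your argument is complete as stated.
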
 
\begin{proof}
See Section~\ref{proof1}.
\end{proof} 

Proposition \ref{prop:optimization_W_i} yields a small 3D optimization problem that we efficiently solve using the 
quasi-Newton box-constrained L-BFGS-B algorithm with limited memory
calculations for the Hessian matrix \citep{byrdetal95, zhu97}. The box
constraints are obtained from the ranges of $\bW$. 
Further, the part of \eqref{eq:MRF_part} that is inside the
parenthesis, is needed to be computed only once per iteration per voxel using sparse
matrix computations and the last term can be computed many times
inside the L-BFGS-B algorithm as the algorithm explores the
space. Also, to have more stable ratios of Bessel functions in the
expression for $Z^{(t)}_{ij}$, we have used exponentially scaled
Bessel functions~\citep{abramowitzandstegun72}, as in \citet{maitra13}, and resorted to a Taylor series expansion as the argument tends to $\infty$. 


Next, for fixed $\W$, we describe how to update the estimates of
$(\bPsi,\bbeta).$ The following proposition shows that given $\bbeta,$
$\bPsi$ can be analytically profiled out giving a simple objective function only in $\bbeta$.
\begin{prop}\label{prop:profileout}
 Fix $\bW=\bW^{(t+1)}.$ For fixed $\bbeta$, the function $\bPsi \rightarrow Q^*(\W,\bPsi,  \bbeta; \bW)$  is maximized at $\displaystyle \hat\bPsi =
  n^{-1}\W'\bLambda\W$ (with $\bLambda\equiv\bLambda(\bbeta)$).
  The resulting profile function for $\bbeta$ is
	\begin{equation}\label{eqn:profilelikelihood}
		Q^*_p(\bbeta) = c + \frac{3}{2}\log |\bLambda|_+ - \frac{n}{2} \log |\W'\bLambda \W|
	\end{equation}
	where $c$ is a constant that depends only on $n$ but not on $\bbeta$. 
\end{prop}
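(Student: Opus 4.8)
The plan is to peel off the dependence of $Q^*$ on $\bPsi$, recognize the remainder as the classical Gaussian covariance log-likelihood, maximize it in closed form, and substitute the maximizer back to obtain a function of $\bbeta$ alone. Throughout, fix $\bW=\bW^{(t+1)}$ and write $\bS:=\W'\bLambda\W$, a $3\times3$ symmetric positive semidefinite matrix (here and below $\bLambda=\bLambda(\bbeta)$).

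Since $Q(\W;\W^{(t)})$ does not involve $\bPsi$, and since \eqref{matrixnormal} gives
\[
  \log f(\W;\bPsi,\bLambda)=\tfrac32\log|\bLambda|_+-\tfrac12\Tr(\bPsi^{-1}\bS)-\tfrac{3n}{2}\log 2\pi-\tfrac n2\log|\bPsi|,
\]
maximizing $\bPsi\mapsto Q^*(\W,\bPsi,\bbeta;\W)$ over the cone of $3\times3$ positive definite matrices reduces to maximizing $g(\bPsi):=-\tfrac12\Tr(\bPsi^{-1}\bS)-\tfrac n2\log|\bPsi|$.

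I would dispatch this standard step in one of two ways. Either substitute $\bOmega=\bPsi^{-1}$, so that $g$ becomes $\tfrac n2\log|\bOmega|-\tfrac12\Tr(\bOmega\bS)$, a strictly concave function of $\bOmega$ on the positive definite cone whose unique stationary point, obtained from $\tfrac n2\bOmega^{-1}=\tfrac12\bS$, is $\bOmega=n\bS^{-1}$, i.e.\ $\bPsi=\hat\bPsi:=n^{-1}\bS$. Or, avoiding the reparametrization, put $\bB=\hat\bPsi^{1/2}\bPsi^{-1}\hat\bPsi^{1/2}$, observe $g(\bPsi)=-\tfrac n2\big(\Tr\bB-\log|\bB|\big)-\tfrac n2\log|\hat\bPsi|$, and use the scalar bound $\lambda-\log\lambda\ge 1$, with equality iff $\lambda=1$, on the eigenvalues of $\bB$ to get $g(\bPsi)\le g(\hat\bPsi)$ with equality only at $\bPsi=\hat\bPsi$. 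Either way, $\hat\bPsi=n^{-1}\W'\bLambda\W$ is the maximizer. Substituting it back, $\Tr(\hat\bPsi^{-1}\bS)=n\Tr\bI_3=3n$ and $\log|\hat\bPsi|=\log|\W'\bLambda\W|-3\log n$, so the maximized value of $\log f$ equals $\tfrac32\log|\bLambda|_+-\tfrac n2\log|\W'\bLambda\W|$ plus the constant $-\tfrac{3n}{2}(1+\log 2\pi-\log n)$; adding the $\bbeta$-free quantity $Q(\W^{(t+1)};\W^{(t)})$ and collecting everything independent of $\bbeta$ into $c$ yields \eqref{eqn:profilelikelihood} (so $c$ contains, beyond the stated $n$-only term, the additive $Q(\W^{(t+1)};\W^{(t)})$, which is immaterial to the ensuing maximization over $\bbeta$).

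I foresee no genuine obstacle, only two points to handle with care. The first is the routine global/boundary check: $g(\bPsi)\to-\infty$ as any eigenvalue of $\bPsi$ tends to $0$ or to $\infty$, so the lone interior stationary point $\hat\bPsi$ is in fact the global maximizer, uniqueness coming from strict concavity after the reparametrization (or from the strict scalar inequality). The second, more pertinent here, is that $\hat\bPsi$ and the determinant $|\W'\bLambda\W|$ in \eqref{eqn:profilelikelihood} require $\bS=\W'\bLambda\W$ to be nonsingular; since $\bLambda(\bbeta)$ has the one-dimensional kernel $\mathrm{span}(\bone_n)$ for $\bbeta$ interior to the simplex $2\beta_x+2\beta_y+2\beta_z=1$, positive definiteness of $\bS$ is equivalent to the three columns of $\W$ together with $\bone_n$ being linearly independent, a generic condition preserved along the AECM iterations; on its complement $g$ is unbounded above and the profile is undefined, so one restricts $\bbeta$ to this admissible set.
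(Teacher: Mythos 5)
Your proposal is correct and follows essentially the same route as the paper: both reduce the problem to the classical Gaussian covariance maximization of $-\tfrac12\Tr(\bPsi^{-1}\W'\bLambda\W)-\tfrac n2\log|\bPsi|$, solve it via a spectral inequality (the paper invokes AM--GM on the eigenvalues of $n^{-1}\bPsi^{-1}\W'\bLambda\W$, which is equivalent to your $\lambda-\log\lambda\ge 1$ bound), and substitute $\hat\bPsi=n^{-1}\W'\bLambda\W$ back to obtain \eqref{eqn:profilelikelihood}. Your added remark on the required nonsingularity of $\W'\bLambda\W$ is a sensible precaution that the paper leaves implicit.
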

\begin{proof}
See Section~\ref{proof2}.
\end{proof}
\eqref{eqn:profilelikelihood} and its derivative require calculation
of $|\bLambda|_+$ and the determinant of a
$3\!\times\!3$ matrix $\W'\bLambda \W.$ 
but these are easily obtained by noting that the eigenvalues of
$\bJ_k$~\citep{duttaandmondal15} are
\begin{equation}
	\lambda_i = 2\left[1-\cos\left\{\pi(i-1)/k\right\}\right] \mbox{ for } i=1,2 \ldots, k.
\end{equation}
Let $\bD_k$ be the vector of eigenvalues of $\bJ_k\ (k = n_x,
n_y, n_z)$. Then $|\bLambda|_+$ is a product of the elements of the
vector $\beta_x 
\1_{n_z} \!\otimes\! \1_{n_y}\! \otimes\! \D_{n_x}\! +\! \beta_y \1_{n_z}\! \otimes
\D_{n_y}\! \otimes\! \1_{n_x}\! +\! \beta_z \D_{n_z}\! \otimes\! \1_{n_y}\! \otimes\!\1_{n_x}$.
Our matrix-free treatment of $(\bPsi,\bbeta)$ is a faster, more elegant and
easier-implemented contrast to~\citet{maitraandriddles10}
where $\bPsi$ is further parametrized via a Cholesky factorization and
updated along  with $\bbeta$ using another L-BFGS-B algorithm.

Like OSL-EM, we initialize $\bW$ in our AECM algorithm with the LS
estimates, and obtain initial values of $\bPsi$ and $\bbeta$, as per
Proposition~\ref{prop:profileout}. Unlike OSL-EM, our AECM algorithm
is guaranteed to converge~\citep{mengandvandyk97}, as also shown in
Figure~\ref{fig:likeliplot}.

\subsubsection{Generating synthetic images}
The estimated $(\rho,\T_1,\T_2)$, obtained from the terminating
$\bW$s, are inserted along with desired design parameter
settings into~\eqref{eq:bloch} to synthesize spin-echo images. Similar methods
can be employed for other MRI pulse sequences.

\subsection{Matrix-free standard error computations}
\label{variance}
An attractive feature of our model-based approach and the use of penalized
likelihood estimators is the provision for ``on-the-fly'' calculation of
SEs. We now provide the theoretical
development and fast implementation for calculating SEs of contrasts of
regions of interest (ROIs) in synthetic images $\widehat\bnu$.

Following \citet{segaletal94} or~\citet{leeandpawitan14}, the estimated observed penalized information matrix for $\W$ is given by
$\widehat{\bOmega} = \widehat\bH + \widehat{\bLambda}\otimes \widehat{\bPsi}^{-1},$
where $\widehat{\bH}$ is the negative of the Hessian of the observed
log-likelihood \eqref{llhd} with respect to $\W$ evaluated at $\widehat{\W}$
(see Section~\ref{sec:supp-info-mat}), $\widehat{\bLambda}$ is the
$\bLambda$ matrix evaluated at the terminating $\widehat{\bbeta}$, and
$\widehat{\bPsi}$ is also obtained from the terminated AECM algorithm. Thus,
$\widehat\bH$ is a $3n\!\times\! 3n$ block-diagonal matrix (with entries
corresponding to the background voxels set to zero). 
The approximate dispersion matrix of a synthetic image $\hat\bnu$ is calculated
from $\hat\bOmega$ by the multivariate delta method to be $\bS= \nabla\widehat{\bnu} \widehat{\bOmega}^{-1}\nabla \widehat{\bnu}'$
where $\nabla\widehat{\bnu}$ is the $n\times 3n$ Jacobian matrix of
$\bnu$ with respect to $\W$ evaluated at $\widehat{\W}$.  The computation of
$\bS$ is challenging but can be parallelized. Moreover, in 
practice, we are mostly interested in calculating means and SEs of
ROIs drawn on synthetic images. The SE of a linear combination
$\bc'\widehat\bnu$ of the voxel-wise values of a synthetic image is
$\SE(\bc'\widehat{\bnu}) = ({\bc'\bS\bc})^{1/2} =
\{(\bc'\nabla\widehat{\bnu})
\widehat{\bOmega}^{-1}(\nabla\widehat{\bnu}'\bc)\}^{1/2}.$

The computation of
$\SE(\mathbf{c}'\widehat{\bnu})$ still presents substantial numerical
challenges.
In particular, it requires us to solve an equation of the form
$\widehat{\bOmega}\bx = \bd$ where $\bd = (\nabla
\widehat{\bnu})'\bc$. Although $\widehat{\bOmega}$ is sparse, it is
typically so large that standard methods based on sparse matrix
factorization (e.g. sparse Cholesky decomposition) run out of memory
mainly because they suffer from fill-ins. In our example, even the
heuristic methods \citep[Ch. 8]{davis2006direct} for reducing fill-ins
failed to provide a factorization in machines with as much as 384GB RAM. So
we use the matrix-free Lanczos algorithm~\citep{duttaandmondal15} to
solve the system. 
Specifically, we initialize as $\theta_1 =
\|\bd\|$, $\bv_1 = \bd/\theta_1,$ $\bw_1 = \widehat{\bOmega}\bv_1,$
$\kappa_1 = {\bv}_1'{\bw}_1,$ $\gamma_1 = \sqrt{\kappa_1},$ $\phi_1 =
\theta_1/\gamma_1,$ ${\bh}_1 = {\bv}_1/\gamma_1$ and set the initial
solution: ${\bx}_1 = \phi_1{\bh}_1$. Next, for $i=2,3,\ldots,$ we set
 \begin{itemize}
  \item  $\theta_i = \| {\bw}_{i-1} - \kappa_{i-1}{\bv}_{i-1} \|.$
  \item  ${\bv}_i = \left({\bw}_{i-1} - \kappa_{i-1}{\bv}_{i-1}\right)/\theta_i$.
  \item  ${\bw}_i = \widehat{\bOmega}{\bv}_{i} - \theta_i {\bv}_{i-1}.$
  \item $\kappa_{i} = {\bv}_{i}'{\bw}_i,$ $\delta_i = \theta_i/\gamma_{i-1},$ and $\gamma_i = \sqrt{\kappa_i - \delta_i^2}.$ 
  \item  $\phi_i = -\delta_i\phi_{i-1}/\gamma_i$ and ${\bh}_i = \left(
      {\bv}_i - \delta_i{\bh}_{i-1} \right)/\gamma_i.$
  \end{itemize}
  and update the solution: ${\bx}_i = {\bx}_{i-1} + \phi_{i}
  {\bh}_i$ till convergence, that is declared if the relative error
  $\|{\widehat{\bOmega}\bx}_i - \bd\|/\theta_1$ is below some pre-specified tolerance.

By carefully recycling the memory space for ${\bv}_i$s and ${\bw}_i$s,
the Lanczos algorithm requires only $\mO(n)$ memory in addition to the storage
for $\widehat{\bOmega}.$ Also, $\widehat{\bOmega}$ is used only
through the matrix-vector product $\widehat{\bOmega}{\bv}_{i}$ which can be
computed fast with $O(n)$ computational complexity. Finally, while we may use an
incomplete Cholesky preconditioner to accelerate convergence of the Lanczos
iterations~\citep{duttaandmondal15}, our examples showed that a simple
diagonal preconditioner achieves at least the same  computing speed,
and is often faster.

We conclude our discussion of SE estimation of regional contrasts in
synthesized MR images by noting that its feasibility stems from our
use of a direct model-based approach. Such an approach is not possible
with the deep learning methods of \citet{paletal22} because of the
intractability of the estimation. Moreover, a parametric bootstrap
approach to obtain SEs is not computationally feasible given the time
taken (over a day) to obtain estimates from one training set. In
contrast, our AECM estimates take not even minutes to obtain, which means that
they, along with regional contrast SEs, can be obtained and used
during a patient visit. 

This section has  developed a computationally practical AECM
algorithm for generating synthetic MR images from as few as three
training images, and a method for fast computations
of the SEs of ROI contrasts in such images.


\section{Performance evaluations} 
\label{sec:sim}
We illustrate our AECM methodology and evaluate its performance. Our
studies are on the only known 3D dataset  
collected on the same subject at multiple (TE,TR) test set values (in addition
to the training set) that permits
evaluation of predictions~\citep{maitraandriddles10,paletal22}, and
for greater understanding, on  
simulated datasets from the Brainweb interface~\citep{cocoscoetal97}.
(See Table~\ref{table:3D_best_10_whole_v2} for the design parameter
settings at which the dataset: the images are indexed accordingly.)

From each training set of three images, we estimated
$\rho,\T_1,\T_2$ using 
OSL-EM (OSL), DIPSynMRI (DL), and AECM, and then
inserted them  into~\eqref{eq:bloch} to synthesize
images at the test set values. Additionally, at the suggestion of a
reviewer, we also evaluated performance of our methodology when
\eqref{eq:rice} is approximated by a Gaussian distribution 
 with mean $\nu_{ij}$ and 
standard deviation $\sigma_j.$ The penalty remains unchanged, and is
 \eqref{matrixnormal}. In this scenario, the penalized maximum likelihood
 estimator is the same as the Penalized Least Squares (PLS) estimator
 (and referred to in this paper as such).
 Further, the estimation algorithm is then
 simply a block-coordinate descent method (note that we minimize the
 PLS), and the obtained estimates 
are again used along with 
~\eqref{eq:bloch} to synthesize
images at the desired test set values.

Whether obtained by OLS, DL, AECM or PLS, the predicted synthetic images were
compared to the test set images using the scaled root mean squared
prediction error (RMSPE) for the $j$th image volume, given by
$s_j^{-1}\sqrt{\sum_{i=1}^n (r_{ij} - \widehat\nu_{ij})^2/n},$ where
$s_j$ is the standard deviation of the foreground voxel values in 
$j$th acquired image. Note that our evaluations calculated
$\widehat\nu_{ij}$ in the (scaled) RMSPE calculations in
$\widehat\nu_{ij}$ in two ways. First, we used 
$\widehat\nu_{ij}$ without accounting for the Rice distribution
of the voxel-wise predictions. In the second case, we set $\widehat\nu_{ij}^* =
\hat\sigma_j\sqrt{\pi/2} 
\L_{1/2}\left(-{\hat{\nu}_{ij}^2/\hat{\sigma}_j^2}\right)$, or the mean
of the Rice distribution with parameters $\widehat\nu_{ij}$ and
$\hat\sigma_j$, with $ \L_{1/2}(x) =
e^{x/2}\left[\left(1-x\right)\I_{0}\left(-{x/2}\right)-x\I_{1}\left(-{x/2}\right)\right]$
denoting the Laguerre polynomial of order $1/2$. However, the RMSPEs
in each case were essentially the same, so we report our measures
using the predicted $\widehat\nu_{ij}$. Also, our computations were
done in parallel on a 2.3GHz Intel(R) Xeon(R) Gold 6140 processor,
capable of handling up to 36 threads.  
\subsection{Real Dataset}
\label{sec:ill}
The dataset was acquired~\citep{maitraandriddles10} on a
1.5T Signa scanner using a spin-echo imaging sequence at a resolution of
$1.15 \text{mm}\times 1.15 \text{mm}\! \times\! 7.25 \text{mm}$ in a
field of view set to be $ 294 \text{mm}\!\times\! 294 \text{mm}\!\times\!145 \text{mm}$. For each 3D image volume,  $n_x\!=n_y\!=\!256$ and
$n_z\!=\!20$.
\begin{figure}[h]
	  \centering
          \includegraphics[width=\linewidth]{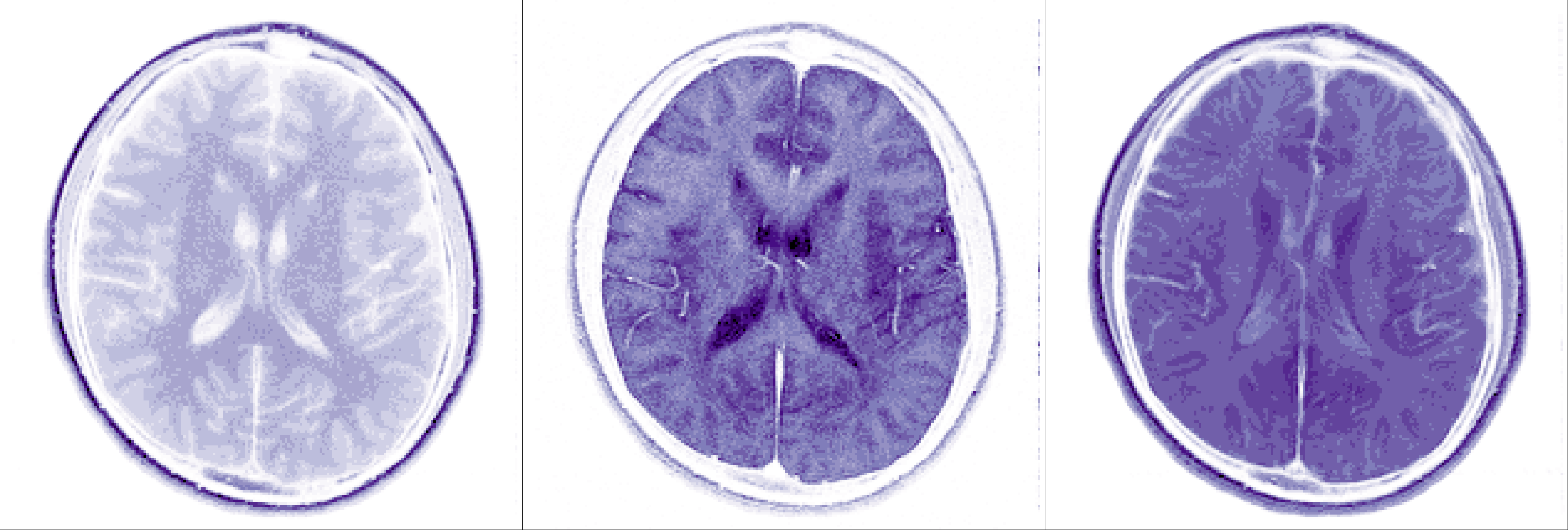}
	  \caption{Mid-axial view of the training set  images with (from left to right) $\TE/\TR =0.01/0.6, 0.08/2, 0.01/3$ (seconds).}  
	  \label{fig:3dplots_train}
\end{figure}
Figure~\ref{fig:3dplots_train} displays the 10th slice of the training
set images. Figure~\ref{fig:likeliplot} shows the value of 
~\eqref{eq:penalizedproblem} as the OSL-EM and AECM algorithms 
proceed from LS initialization. In this example, OSL-EM is seen to
have an uncertain trajectory, and on the whole, goes down after the first iteration. On
the other hand, AECM correctly increases and terminates in a few
iterations. Our implementation of AECM ensures that this accuracy does 
not come at a price, with estimation and synthetic image generation typically
done in about a minute. Further, the AECM shows greater stability with
respect to different initial values than OSL-EM. 

Figure \ref{fig:RMSPE} displays the performance in terms of the RMSPE,
of the competing methods on the eight acquired $(\TE, \TR)$ settings.
\begin{figure}[h]
  \centering
  \includegraphics[width=\linewidth]{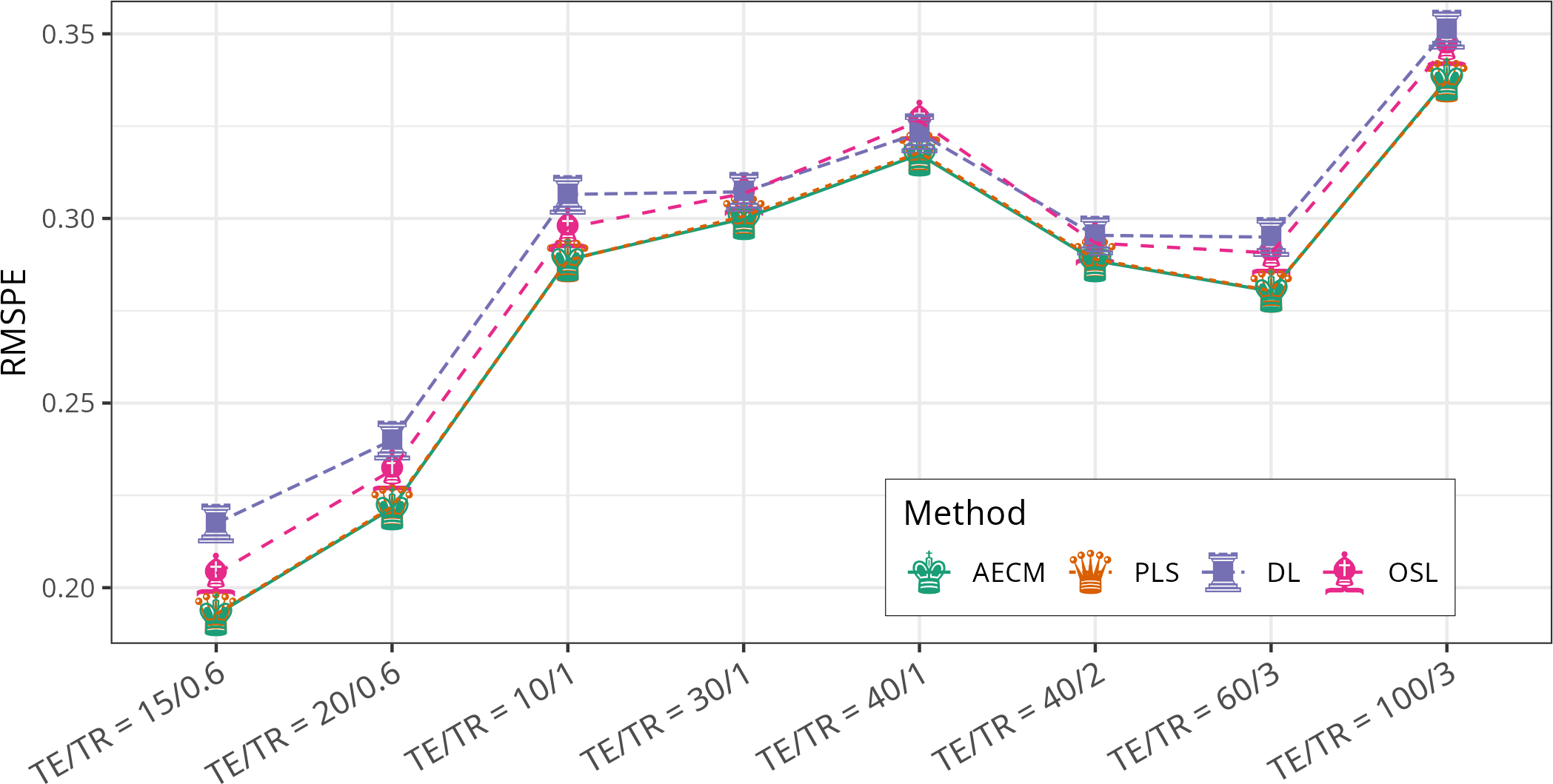}
  \caption{RMPSEs on the eight test images for the different methods.}
  \label{fig:RMSPE}
\end{figure}

We see that AECM is the best performer at all settings, closely followed by PLS.
\begin{figure}[h]  
  \centering
  \vspace{-0.1in}
  \mbox{
	   \subfloat{
		   \includegraphics[width=0.3\linewidth, page=1]{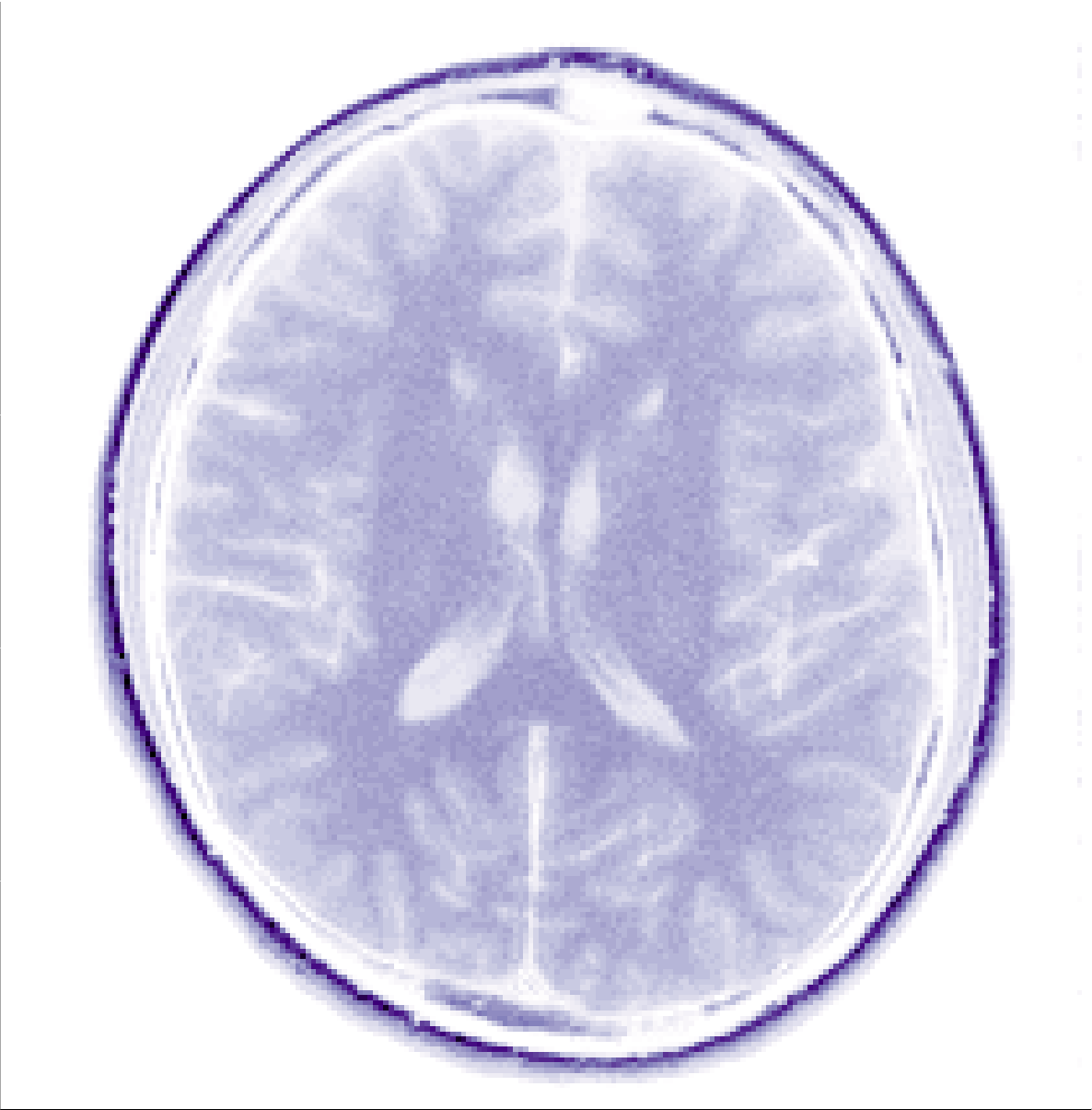}
		 } 
		 \subfloat{
			\includegraphics[width=0.3\linewidth, page=2]{figures/pred-crop.pdf}
		} 
		\subfloat{
			\includegraphics[width=0.3\linewidth, page=3]{figures/pred-crop.pdf}
                      }                    
 }
                    \mbox{
	   \subfloat{
		 \includegraphics[width=0.3\linewidth, page=1]{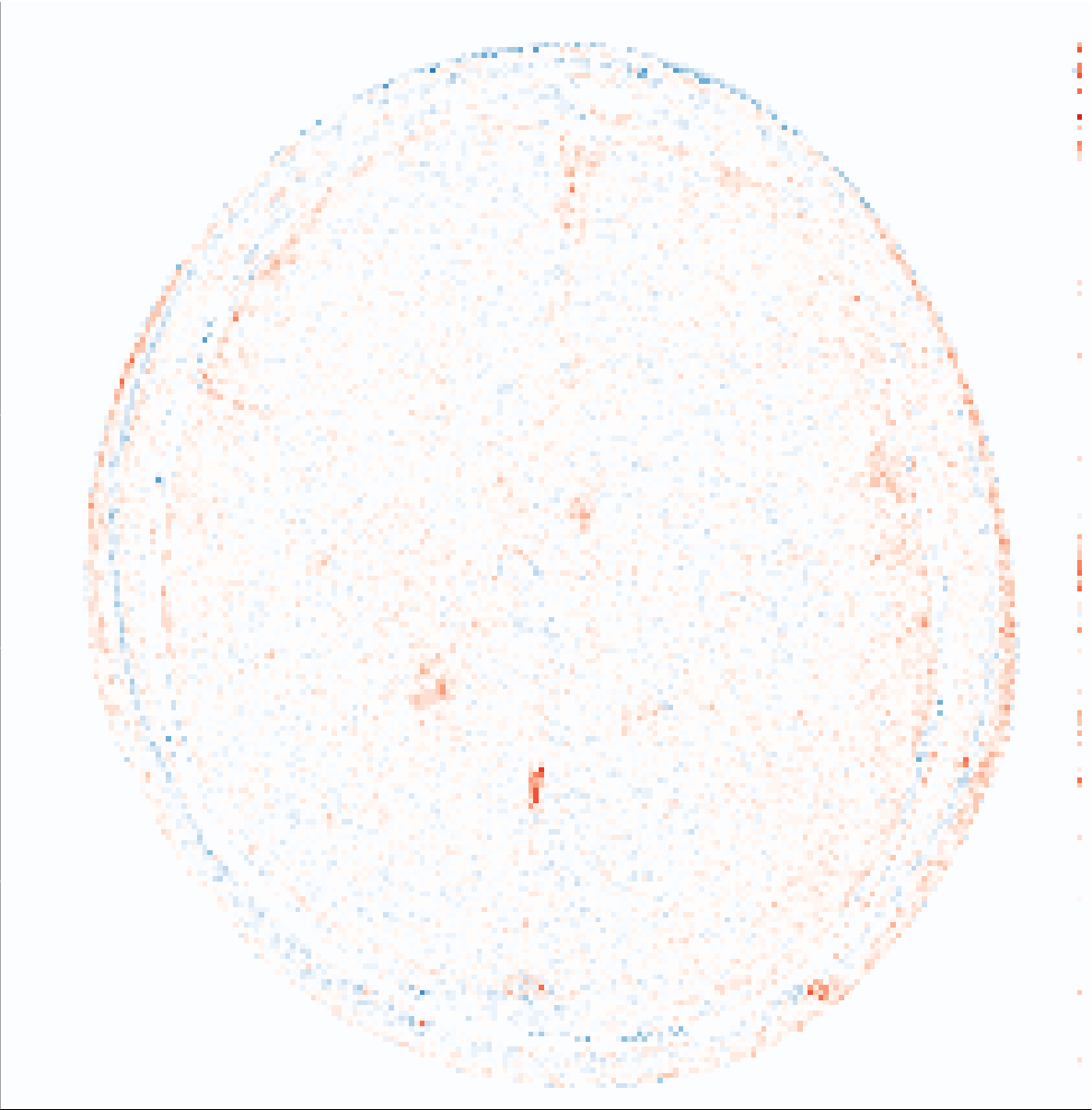}
	   } 
	   \subfloat{
		\includegraphics[width=0.3\linewidth, page=1]{figures/diff-crop.pdf}
	  } 
		\subfloat{
		\includegraphics[width=0.3\linewidth, page=3]{figures/diff-crop.pdf}
              }
              }
	   \caption{(Top) Mid-axial views of the OSL, DL and
             AECM-synthetized images for $\TE/\TR = 15/0.6$, along with
             (bottom) the differenced image (from the acquired image)
             at that setting.
		   }
	   \label{fig:3dplots_all_mu}
   \end{figure}
Figure~\ref{fig:3dplots_all_mu} displays the performance of OSL, DL, PLS and
AECM at a sample setting. A closer inspection of the  OSL and DL
synthetic MR images indicate greater disagreement from the truth, than
with the AECM and PLS image.
Further, following~\citet{maitraandriddles10}, we also considered performance
of the AECM methodology in synthesizing MR images  
as the number of images in the training set is increased. 
Figure~\ref{fig:3D_consistency} in Section~\ref{sec:supp-consistency} 
shows improvement of the performances 
over the best set of images, indicating consistency of our methodology.

\subsubsection{Estimating SEs of ROI means}
We also evaluated our estimated  SEs of ROI means of
synthetic images obtained using the training set of the real data. 
We considered three ROIs consisting of 
Cerebro-Spinal Fluid (CSF), Gray Matter (GM) and White Matter
(WM). Table~\ref{table:variance} provides the ROI means of the
synthetically generated variances. The machinery of
Section~\ref{variance} was used to estimate SEs of the ROI means
and compared with computationally expensive boostrap
estimates. (For details, see Section~\ref{sec:supp-para-boot}, which
also shows insignificant differences between our estimates and the 
bootstrap-estimated SEs.)
 
Note also that in the AECM case, the bootstrap-estimated SEs are
computationally expensive to obtain, but not prohibitive as in the
case of SEs of DL-estimated regional means because the DL-estimation
procedure would take almost a day for each bootstrap replication. In
contrast, our AECM estimate takes a few minutes.
\begin{table}[h]
	\centering
	\caption{Estimated CSF, GM and WM ROI means in synthetic
          images and their SEs: $\widehat\sigma_s$s calculated using
          the formula in Section~\ref{variance}, and
          $\widehat\sigma_b$s estimated using parametric
          bootstrap. The scale for the SEs is $\times10^{-3}$.} 
	\label{table:variance}
	    \begin{tabular}{c|rcc|rcc|rcc}
			    \hline 
			    \text{Image}  & \multicolumn{3}{c|}{\text{CSF}} & \multicolumn{3}{c|}{\text{GM}} & 
			    \multicolumn{3}{c}{\text{WM}} 
				\\
				\cmidrule{2-10}
				\text{$j$ } &  Mean & $\widehat\sigma_s$ & $\widehat\sigma_b$ &  Mean & $\widehat\sigma_s$ & $\widehat\sigma_b$ 
				&  Mean & $\widehat\sigma_s$ & $\widehat\sigma_b$\\
				\hline
				2 & 36.7 & 3.73 & 4.45 & 73.1 & 3.31 & 3.71 & 123.4 & 7.53 & 6.93\\ 
				3 & 33.8 & 3.95 & 4.15 & 67.9 & 3.05 & 3.41 & 111.4 & 7.07 & 6.61 \\ 
                4 & 57.7 & 5.11 & 5.62 & 112.7 & 4.09 & 4.62 & 167.1 & 7.73 & 7.38 \\ 
                5 & 42.0 & 4.13 & 4.35 & 84.0 & 2.99 & 3.28 & 110.5 & 6.22 & 6.07 \\ 
                6 & 36.3 & 3.85 & 4.07 & 72.8 & 2.74 & 2.95 & 90.0 & 6.44 & 6.52 \\ 
                7 & 88.4 & 5.55 & 5.70 & 164.0 & 3.96 & 4.23 & 191.4 & 9.86 & 9.52 \\ 
                8 & 55.9 & 4.44 & 4.51 & 106.4 & 2.79 & 2.77 & 102.7 & 7.21 & 7.24 \\ 
                11 & 52.6 & 5.42 & 5.41 & 93.4 & 3.59 & 3.73 & 69.8 & 7.44 & 7.80 \\ 
                12 & 33.3 & 5.13 & 4.69 & 54.5 & 3.39 & 3.49 & 31.1 & 5.74 & 6.18 \\ 
			\hline
		\end{tabular}
\end{table}

\subsubsection{Choice of Optimal Training Set}


Following~\citet{maitraandriddles10}, we investigate all possible 
$(\TE, \TR)$ triplets to see the best choice for synthesizing images.
For each possible set of three training images, we obtained LS, OSL-EM and
AECM-estimated parameters and then obtained synthetic images for the
rest images considered test set. Section~\ref{sec:supp-optimal-choice} 
shows that the best synthetic images are indeed obtained when we have  $\rho-$,
$\T_1$-, and  $\T_2$-weighted images as the triplet of training images.

\subsection{Simulation Experiments}
Our next set of experiments is on prediction accuracy
relative to noise. We demonstrate and evaluate performance of our 
methodology on images obtained at a 1mm$\times$1mm$\times$5mm
resolution with 
181$\times$217$\times$36 voxels and simulated using the Brainweb 
interface~\citep{cocoscoetal97}.
This tool simulates realistic images at TE, TR, and flip angle settings, slice thicknesses, noise levels, and intensity nonuniformity (INU) proportions.
We collected training images from Brainweb's ``mild" and ``severe" multiple sclerosis lesions databases and 
used them in a simulated spin-echo MR sequence with the same three $(\TE, \TR) $
parameters and noise levels $\{1\%,2.5\%,5\%,7.5\%,10\%\}$. 
Separately, we acquired spin-echo training images with 1\% noise for the same (TE,TR) parameters to investigate the influence of field inhomogeneity 
 at $\{0\%, 5\%, 10\%\}$ INU (of so-called field A) levels.
We acquired noiseless ``ground truth" images from the Brainweb interface at 0\% INU levels 
and nine $(\TE, \TR)$s to evaluate our predicted synthetic images.

We compared the predicted synthetic images at the nine (unshaded) 
$(\TE,\TR)$ values of RMSPE of the
predicted synthetic MRIs for the ``mild'' MS dataset in Figure~\ref{fig:BW_RMSPE}. 
We see that the AECM provides better
prediction accuracy than OSL and at a given setting and also beats PLS. 
At lower noise, AECM outperforms DL, while DL outperforms AECM for higher noise.
The performances for the ``severe'' MS dataset and other scenarios are described in Section~\ref{sec:supp-brainweb-others}. 
\begin{figure}[h]
	\centering
    \includegraphics[width=\linewidth, page=10]{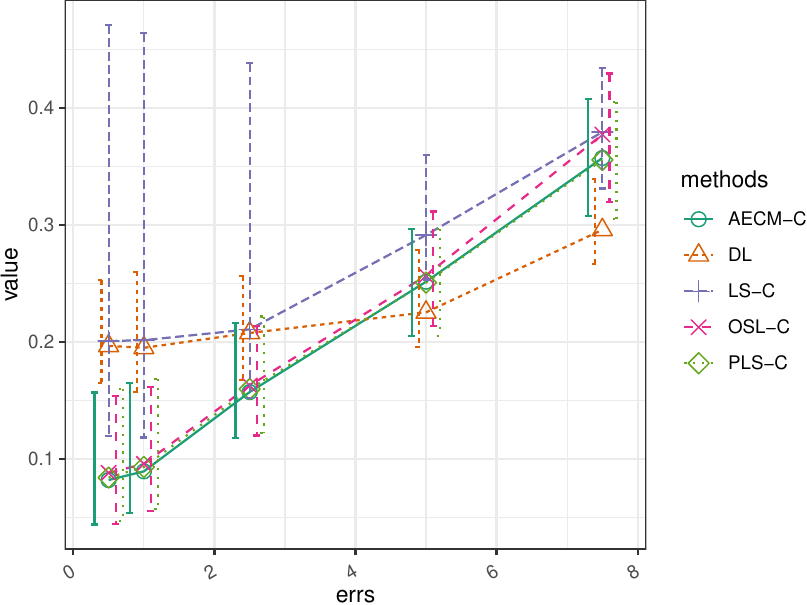}  
	\caption{RMSPE values for different noise of Brainweb
          simulated data for different methods for ``mild'' MS
          dataset. The lines in the linked boxplot connect the
          different datasets.}
	\label{fig:BW_RMSPE}
\end{figure}
Our performance evaluations indicate our AECM approach can
improve synthetic MRI both in simulation and real data settings. 
Further, AECM comparable to LS method in terms of time  and better in terms of performances.
It is far quicker than DL (with AECM taking about a  minute, and the latter taking almost a day), synthesizing
images  in real time and yielding superior results in realistic
low-noise situations.


\section{Discussion}
\label{sec:dis}
We have provided a computationally practical but theoretically sound
implementation of model-based synthetic MR imaging that can enable its
use patient-specific settings. Our approach penalizes the
Rice-distribution based loglikelihood through a transformed GMRF~\citep{maitraandriddles10} with parameters estimated via a
matrix-free AECM scheme, that unlike OSL-EM, is guaranteed to
solve~\eqref{eq:penalizedproblem} locally. 
Over different simulations, the block-descent-optimization-based PLS 
gives  comparable results, but is edged out by AECM. Further, while
PLS estimates are obtained at about half the time as AECM estimates,
the latter itself takes less than a couple of minutes, even on a large
image volume, and is practical to apply in a clinical setting,
and can  be obtained and
interpreted during a patient visit. With the same resources, the
DL-based method took about a day. 
Experimental evaluations also show AECM to have better overall performance over even DL-based synthetic MR,
at substantial lower cost. An added benefit of our model-based
approach is the ready availability of SE estimates, for which we
also provide practical matrix-free computations. Our experiments show
our estimates to be nonsignificantly distinct from those
bootstrap-obtained values. Compute costs make
parametric bootstrap-estimated SEs impractical to implement for
the DL-based method where  (approximate) analytical expressions for
the regional contrast SEs are also not readily obtained.

A reviewer has asked about the context of our results in light of 
\citet{adrianetal13}'s findings that Ricean-based activation tests in functional Magnetic Resonance Imaging (fMRI) analysis are only superior to
        Gaussian-approximation-based tests for small signal-to-noise
        ratios (SNRs) of less than 0.6, typically much smaller than
        what is seen in 
        practice. We note that those findings are on activation
        detection and under the assumption of independence in the fMRI
        time series. Indeed, \citet{adrianetal18} demonstrated that using the
        original Gaussian complex-valued data (which forms the basis
        of the Ricean model for the magnitudes) and an AR($p$)
        temporal dependence 
        and a general covariance structure for the real and imaginary
        errors, showed greater power of activation than using a
        Gaussian AR($p$) model on the magnitude data.
(Subsequent recent work in \citet{adrianetal23} supports these
findings also for the Rice-distributed time series data.)
        Finally, we note
        that our methods here can benefit MRI data acquired at higher
        resolution, since voxel volume is inversely proportional to
        SNR \citep{adrianetal18}. Therefore, we believe that the use
        of the more accurate Rice model \eqref{eq:rice} over its
        Gaussian approximation has value, especially in the
        context of higher-resolution synthetic MRI studies. 

Both reviewers have asked about the feasibility of our method  with
penalties more general than the first order MRF in \eqref{matrixnormal}.
For such  neighborhood structures ({\em e.g.}, higher order MRFs) on $\W$, the partitioning of $\W$ needs to be done through coding sets \citep{besag74} or concliques \citep{kaiseretal12} that generalize the checkerboard bifurcation. Succinctly, for higher order MRFs, voxels can be partitioned into disjoint sets ${\cal C}_1,{\cal C}_2,\ldots, {\cal C}_l$ for some $l \in \mathbb{N}$ so that for each $j$, $\{\W_i ~:~ i\in{\cal C}_j\}$ are conditionally independent random variables given the rest. Consequently, the optimization problems on each ${\cal C}_j$  $(1\le j \le l)$ remain embarrassingly parallel. Further, the eigenvalues of the matrix $\bLambda$ are analytically known for many of these higher order MRFs \citep{mondal18} on 2D lattice, and we surmise these can be also computed in 3D.

There are some other issues that could benefit from further
attention. For instance, transformations
beyond the ones used in transforming $(\rho,\T_1,\T_2)$ to
$(\W_1,\W_2,\W_3)$ could be investigated. Also of interest may be the choice of penalty functions that go beyond stationary GMRFs. The use of 
practical Bayesian methodology incorporating ingredients of our
matrix-free approach may also be worth exploring. 

\section*{Acknowledgments}
A portion of this article
won the first author a 2023 Student Paper Competition award from the
American Statistical Association (ASA) Section on Statistical Computing
and Graphics. 
The research of the first and third authors was supported in part by
the National Institute of Biomedical Imaging and Bioengineering (NIBIB) 
of the National Institutes of Health (NIH) under Grant R21EB034184.
The content of this paper is however solely the responsibility of the
authors and does not represent the official views of the  NIBIB or the NIH.

\bibliographystyle{IEEEtran}
\bibliography{variance}

\setcounter{equation}{0}
\setcounter{figure}{0}
\setcounter{table}{0}
\setcounter{section}{0}

\renewcommand\thefigure{A\arabic{figure}}
\renewcommand\thetable{A\arabic{table}}
\renewcommand\thesection{A}
\renewcommand\thesubsection{\Alph{section}-\Alph{subsection}}
\renewcommand\theequation{A\arabic{equation}}

\section{Appendix}

\section{Proofs}
\subsection{Proof of Proposition~\ref{prop:optimization_W_i}}
\label{proof1}
Only the kernel of $f(\bW;\bPsi,\bLambda)$ in~\eqref{matrixnormal} involves $\bW$ so we
have, but for a constant free of $\bW$, 
\begin{equation} \label{eq:MRF}
		\log f(\bW;\bPsi,\bLambda) = -\frac12 \Tr(\bPsi^{-1} \W' \bLambda \W) =
                -\frac12 \Tr\left(\bLambda (\W \bPsi^{-1} \W')\right)
                = -\frac12 \sum_p \sum_q \Lambda_{p,q} \cdot \W_q \bPsi^{-1} \W_p'.
\end{equation}
The portion of~\eqref{eq:MRF} that involves the $i$th voxel, and hence
$\bW_i$, is 
\begin{equation}\label{eq:MRF_part}
	\log f(\bW;\bPsi,\bLambda) = c -\Big(\sum_{q\ne i} \Lambda_{i, q} \W_q\Big) \bPsi^{-1} \W_i' 		\  -\frac12 \Lambda_{i, i} \W_i' \bPsi^{-1} \W_i
\end{equation}
which gives us equation~\eqref{eq:Q_star_i}, with $c$ a constant not depending on $\W_i$. However, $\bLambda\neq\bzero$ only if $q \heartsuit i$. 
Applying the chain rule for derivatives, we get the first partial
derivative, w.r.t. $W_{ik}$, of the $Q(\bW;\bW^{(t)})$ of (\ref{eq:Q_fn}) as
\begin{equation}
	\frac{\partial Q}{\partial W_{ik}} 
	= \sum\limits_{j=1}^m \sigma_j^{-2} \left(
	- \nu_{ij} + r_{ij} Z^{(t)}_{ij} \right)
	\frac{\partial \nu_{ij}}{\partial W_{ik}},
        \label{eq:grad}
      \end{equation}
      for which we need the derivatives of~\eqref{eq:bloch}.  
For the transformed variable $\bW$, the Bloch equation is
\begin{equation}
    \nu_{ij} = W_{i1} \left\{1-W_{i2}^{\TR_j}\right\} W_{i3}^{\TE_j},
    \label{eq:bloch_2}
\end{equation}
where from we get the first partial derivatives
\begin{equation*}
\frac{\partial \nu_{ij}}{\partial W_{ik}} = \begin{cases}
W_{i3}^{\TE_j}\left\{1-W_{i2}^{\TR_j}\right\}, &\qquad\!\!\!\!\!k\!=\!1\\
-W_{i1}  \TR_j  W_{i3}^{\TE_j}W_{i2}^{\TR_j-1}, &\qquad\!\!\!\!\!k\!=\!2\\
W_{i1}  \TE_j   W_{i3}^{\TE_j-1}\left\{1-W_{i2}^{\TR_j}\right\}, &\qquad\!\!\!\!\!k\!=\!3\\
\end{cases}
\end{equation*}
and the second partial derivatives
\begin{align*}
\frac{\partial^2 \nu_{ij}}{\partial W_{ik}\partial W_{ik'}} = 
& \begin{cases}
0, &\qquad\!\!\!\!\!k\!=\!1,\!k'\!=\!1\\
- \TR_j  W_{i3}^{\TE_j}W_{i2}^{\TR_j-1}, &\qquad\!\!\!\!\!k\!=\!1,k'\!=\!2\\
\TE_j W_{i3}^{\TE_j-1}(1-W_{i2}^{\TR_j}), &\qquad\!\!\!\!\!k\!=\!1,k'\!=\!3\\
- W_{i1} \TR_j(\TR_j-1) W_{i3}^{\TE_j}W_{i2}^{\TR_j-2}, &\qquad\!\!\!\!\!k\!=\!2,k'\!=\!2\\
- W_{i1} \TR_j  \TE_j  W_{i3}^{\TE_j-1} W_{i2}^{\TR_j-1}, &\qquad\!\!\!\!\!k\!=\!2,k'\!=\!3\\
W_{i1} \TE_j(\TE_j-1)W_{i3}^{\TE_j-2}(1-W_{i2}^{\TR_j}), &\qquad\!\!\!\!\!k\!=\!3,k'\!=\!3.\\
\end{cases}
\end{align*}
From the above, we get \eqref{eq:grad_Q_star_i} from~\eqref{eq:grad}
and~\eqref{eq:MRF_part}.
\subsection{Proof of Proposition~\ref{prop:profileout}}
\label{proof2}
From \eqref{matrixnormal}, we have
\begin{equation}
  \begin{split}
\log f(\bW;\bPsi,\bbeta) & =
                           -\frac{1}{2}\Tr\left(\bPsi^{-1}\W'\bLambda
                           \W\right) + \frac{3}{2}\log |\bLambda|^{*}
                           - \frac{3n}{2} \log (2\pi) - \frac{n}{2}
                           \log |\bPsi|  \\
    &= -\frac{1}{2}\Tr\left(\bPsi^{-1} \bm\Xi\right)  - {\frac{n}{2}} \log |\bPsi| + K, \end{split} \label{eq:Psi_beta}
\end{equation}
where $\bm\Xi = (\W'\bLambda \W)$ and $K$ depends only on $\bbeta$ and
$n$. 

The optimization of \eqref{eq:Psi_beta} w.r.t. $\bPsi$ is
similar to the  problem for finding the maximum likelihood estimator
for the variance-covariance matrix in multivariate normal samples as
in \citet[Theorem 4.2.1]{mardiaetal79}. In our case, the first two
terms contain the arithmetic (AM) and the geometric means (GM) of
eigenvalues of $n^{-1}\bPsi^{-1}\bm\Xi$. The AM-GM inequality shows
that  this expression is maximized at $\displaystyle \bPsi =
{\bm\Xi}/{n} = {(\W'\bLambda \W)}/{n}$. Incorporating this estimated
value yields the profile likelihood
\begin{equation}\label{eq:Psi_beta_2}
          Q_p^{*}(\bbeta) = -\frac{1}{2}\Tr\left(n \bI_3\right) + \frac{3}{2}\log |\bLambda|^{*} - \frac{3n}{2} \log (2\pi) - \frac{n}{2} \log \left|\frac1n \bm\Xi\right| 
          = c+ \frac{3}{2}\log |\bLambda|^{*} - \frac{n}{2} \log |\W'\bLambda \W| 
\end{equation}
where $c$ is a constant that involves $n$.

\section*{Supplementary Information}
\setcounter{equation}{0}
\setcounter{figure}{0}
\setcounter{table}{0}
\setcounter{section}{0}

\renewcommand\thefigure{S\arabic{figure}}
\renewcommand\thetable{S\arabic{table}}
\renewcommand\thesection{S\Roman{section}}
\renewcommand\thesubsection{S\Roman{section}-\Alph{subsection}}
\renewcommand\theequation{S\arabic{equation}}

\section{Supplementary materials for Theory and Methods}\label{sec:sim-meth}
\subsection{Information Matrix}
\label{sec:supp-info-mat}
The observed information matrix corresponding to $\W$ is obtained
from~\eqref{llhd} and~\eqref{eq:rice}. The first partial derivative is similar to~\eqref{eq:Q_star_i}: 
\begin{equation*}
\frac{\partial \ell}{\partial W_{ik}} 
= \sum\limits_{j=1}^m \sigma_j^{-2} \left\{ 
- \nu_{ij} +
r_{ij} \frac{\I_1\left(\frac{r_{ij}\nu_{ij}}{\sigma_j^2}\right)}{\I_0\left(\frac{r_{ij}\nu_{ij}}{\sigma_j^2}\right)}\right\}
\frac{\partial \nu_{ij}}{\partial W_{ik}},
\end{equation*}
while the second partial derivative is
\begin{equation*}
 \frac{\partial^2 \ell}{\partial W_{ik'} \partial W_{ik}}
 = \sum\limits_{j=1}^m  \sigma_j^{-2} \left\{
- \nu_{ij} +
r_{ij} \frac{\I_1\left(\frac{r_{ij}\nu_{ij}}{\sigma_j^2}\right)}{ \I_0\left(\frac{r_{ij}\nu_{ij}}{\sigma_j^2}\right)}\right\}
\frac{\partial^2 \nu_{ij}}{\partial W_{ik'} \partial W_{ik}}
+ \sum\limits_{j=1}^m \Bigg\{ -\frac{1}{\sigma_j^2} + \frac{r_{ij}^2}{\sigma_j^4}  h\left(\frac{r_{ij}\nu_{ij}}{\sigma_j^2}\right)
\Bigg\}  \frac{\partial \nu_{ij}}{\partial W_{ik}}   \frac{\partial \nu_{ij}}{\partial W_{ik'}}.
\end{equation*}
$\widehat\bH$ is equal to the negative second derivative w.r.t. $W_{ik}$ over $i$ and
$k$, and $\I_s(x)$ is the modified Bessel function of the first kind
of order $s$.  Note that, any order of derivative of $\ell_i$ w.r.t. $W_{i'k}$ is $0$ if $i\ne i'$, making $\widehat\bH$ block diagonal, and 
\begin{equation}
    h(x) := \frac{d}{dx} \left(A_1(x)\right) = \frac12 \left[\frac{\I_0(x)\left\{\I_0(x)+\I_2(x)\right\} - 2\I_1^2(x)}{\I_0^2(x)}\right]
\end{equation}
The part of the penalized likelihood from the Rice density contributes to $\widehat\bH$, and the Hessian of the penalty part contributes to  $\bLambda\otimes\bPsi^{-1}$,  producing the $\widehat\bOmega$ matrix. 

\section{Supplementary Materials for Performance Evaluations}
\subsection{The trajectory of the objective function
in~\eqref{eq:penalizedproblem} when using OSL-EM and AECM}
\label{sec:supp-trajectory-likeli}
\begin{figure}
  \vspace{-0.2in}
  \centering
	\includegraphics[width=\linewidth]{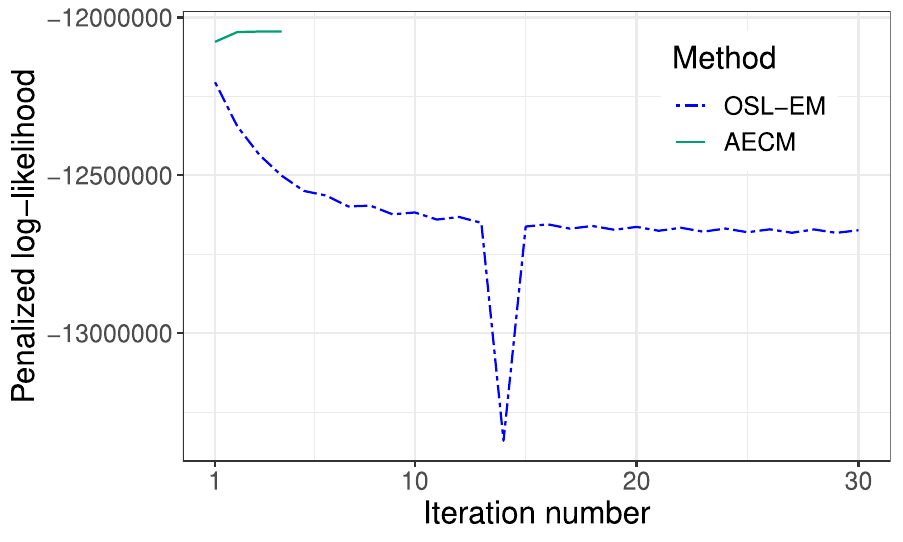}
	\caption{The path of the objective function
          in~\eqref{eq:penalizedproblem}, using OSL-EM and AECM.} 
	\label{fig:likeliplot}
      \end{figure}
Figure~\ref{fig:likeliplot} shows the path of the objective function
in~\eqref{eq:penalizedproblem} for the example in
Section~\ref{sec:ill}. We see that the penalized log-likelihood for
OSL-EM does not always increase in every iteration. In fact, in this
example, it oscillates, especially with increased desired precision in
terms of relative error. However, the penalized log-likelihood
increases for AECM, as is guaranteed by theory. Further, with the
increasing ascent property, the algorithm usually converges with fewer
iterations than OSL-EM. 
\subsection{Bootstrap validation of SE calculations}
\label{sec:supp-para-boot}
We validate our theoretically derived approximate SEs using the
parametric bootstrap. Our bootstrap procedure was as follows. First,
using the estimated $\widehat{\W}$, we obtained resampled training
images were generated using the equations~\eqref{eq:bloch}
and~\eqref{eq:rice} with the train \TE \ and \TR \ settings. For the
$b$th resampled training images, we used AECM to have estimated
$\widehat{\W}^*_b$, $b=1,2,\ldots,B$, with $B$ as the number of
bootstrap replications, and inserted them, along with the test
$(\TE,\TR)$~\eqref{eq:bloch} used to get $\widehat{\bnu}_b$ for
these settings. From these resampled synthetic images, the standad
deviations of ${\bc}'\widehat{\bnu}_b$, over $b=1,2,\ldots,B$,  can be
calculated to provide us with estimated SEs of ${\bc}'\widehat{\bnu}$
which can be compared to our theoretically derived approximate SEs.
However, we note that our validator has sampling variability. In order
to account for this variability, we calculated the standard deviations
of these bootstrap SEs. We used a jackknife-type
estimator to estimate this SD. Specifically, from the bootstrap
replications, we used a leave-one-out method to get
$B$ leave-one-out bootstrapped SE estimates. The standard deviation
over these $B$ leave-one-out SE estimates can be used to obtain a SE
of the boostrap-estimated SE. Our experiments indicate that the
theoretical SEs of the ROI means of the synthetic images are within
one SE of the bootstrap-estimated SEs and so are not significantly
different.

\subsection{Choice of Optimal Training Set}
\label{sec:supp-optimal-choice}
Following~\citet{maitraandriddles10}, we also investigate all possible
$(\TE, \TR)$ settings to see if some choice can 
provide better synthetic images. There are ${}^{12}C_{3}=220$
possible sets of three training images, but some of them are not
distinct and are discarded, leaving behind 212 sets.
For each set of three training images, we obtained LS, OSL-EM and
AECM-estimated parameters and then obtained synthetic images for the
nine $(\TE, \TR)$-values outside the considered training set. 

Table~\ref{table:3D_best_10_whole_v2} displays numerical performance of 
the AECM-estimated synthetic images for the top 10 combinations 
(ordered according to the increasing scaled RMSPE), and shows the 
per cent improvement of AECM and OSL-EM over LS for each of the measures. 
We see around 5\% improvement of AECM over LS. Once again, OSL-EM is 
unpredictable in its improvement over LS, often doing worse.


\begin{table*}[h]
  \centering
  \caption{\label{table:3D_best_10_whole_v2}The 12 $(\TE , \TR)$ settings (in seconds) for the data, and the ten best training sets, their performance
    measures  for the LS estimates, and their average
    performance measures relative to the LS estimates
    ($\times 0.01$). In the table, OSL and AE denote OSL-EM and
    AECM-estimated predictions.}
     \begin{tabular}{lc|rrrrrrrrrrrr}
			  \hline
			  \multirow{3}{*}{Settings} & j & 1 & 2 & 3 & 4 & 5 & 6 & 7 & 8 & 9 & 10 & 11 & 12 \\
			  \hline
			  & TE  & 0.01 & 0.015 & 0.02 & 0.01 & 0.03 & 0.04 & 0.01 & 0.04 & 0.08 & 0.01 & 0.06 & 0.1 \\
			  & TR  & 0.6 & 0.6 & 0.6 & 1.0 & 1.0 & 1.0 & 2.0 & 2.0 & 2.0 & 3.0 & 3.0 & 3.0 \\
			  \hline
		  \end{tabular}
		  \vspace{0.5cm}
		  
  \begin{tabular}{crrrrrrrrrrrr}
    \hline
    & \multicolumn{4}{c}{LS performance measure} & \multicolumn{4}{c}{RMSPE} & \multicolumn{4}{c}{MAPE}\\
    \cmidrule{2-13}
    Training & \multicolumn{2}{c}{RMSPE} & \multicolumn{2}{c}{MAPE} & \multicolumn{2}{c}{$\widehat\nu$} & \multicolumn{2}{c}{$\widehat\nu^*$}  & \multicolumn{2}{c}{$\widehat\nu$} & \multicolumn{2}{c}{$\widehat\nu^*$}
    \\
    \cmidrule{2-13}
    Images & $\widehat\nu$ & $\widehat\nu^*$ & $\widehat\nu$ & $\widehat\nu^*$  & OSL & AE & OSL & AE & OSL & AE & OSL & AE \\
    \hline
    1, 9, 10 & 20.93 & 20.92 & 24.89 & 24.89 & -0.44 & 1.10 & -0.34 & 1.98 & -0.26 & 1.28 & 0.01 & 1.20  \\ 
    2, 9, 10  & 21.11 & 21.10 & 25.08 & 25.07 &-0.64 & 1.28 & -0.50 & 2.41 & -0.47 & 1.49 & -0.09 & 1.44 \\ 
    2, 10, 12 & 21.60 & 21.60 & 25.71 & 25.70 & -1.19 & 0.95 & -1.54 & 1.76 & -1.21 & 1.00 & -0.80 & 0.95  \\ 
    2, 7, 9 & 21.38 & 21.35 & 25.34 & 25.32 & 0.68 & 3.42 & 1.53 & 6.55 & 1.23 & 3.51 & 1.32 & 3.44 \\ 
    1, 10, 12 & 21.64 & 21.64 & 25.77 & 25.77 & -0.90 & 1.12 & -1.26 & 1.92 & -1.09 & 1.15 & -0.81 & 1.03  \\ 
    1, 7, 9 & 21.54 & 21.54 & 25.54 & 25.54 & 0.14 & 2.97 & 0.77 & 5.78 & 0.65 & 2.96 & 0.87 & 2.94 \\ 
    1, 8, 9 & 21.54 & 21.53 & 24.97 & 24.96 & 0.19 & 3.82 & 0.85 & 7.36 & -0.12 & 3.46 & 0.10 & 3.41 \\ 
    1, 10, 11 & 21.86 & 21.83 & 25.83 & 25.81 & 0.02 & 1.05 & 0.57 & 2.01 & 0.15 & 1.08 & 0.36 & 1.05 \\ 
    2, 10, 11 & 21.91 & 21.89 & 25.89 & 25.86 & -0.21 & 1.27 & 0.26 & 2.44 & -0.13 & 1.43 & 0.19 & 1.40 \\
    2, 8, 9 &  21.60 & 21.59 & 25.05 & 25.04 &-0.00 & 3.24 & 0.59 & 6.35 & -0.38 & 3.03 & -0.09 & 3.02  \\ 
    \hline
  \end{tabular}
\end{table*}

\subsection{Consistency of Statistical Methods}
\label{sec:supp-consistency}
\begin{figure}
  \vspace{-0.2in}
  \centering
  \includegraphics[width=\linewidth]{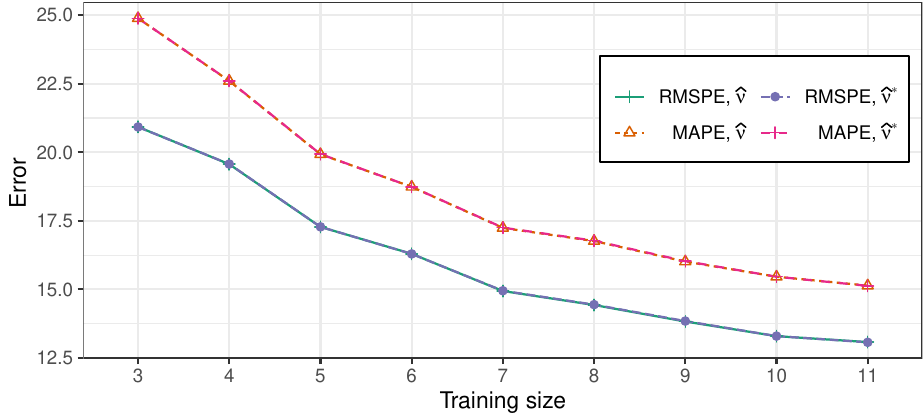}
	\caption{The best training samples for varying training sample
          sizes and their performance measures.}
	\label{fig:3D_consistency}
\end{figure}
A desirable feature of any estimation or prediction method is
statistical consistency, or improvement in performance as more
information (training image set) becomes available. Therefore, we
considered and evaluated performance of the AECM method in generating
synthetic images as the number of images in the training set increased
over $m\in\{3, 4, 5,\dots, 11\}$. For each $m$, we evaluated
predictive performance with a training set of images of all possible training image set combinations of size $m$ and compared them with the remaining images in the test sample. The RMSPE and MAPE
values for the best set, and for each $m$ are displayed in
Figure~\ref{fig:3D_consistency} and show consistency of our synthetic
image generation method.

\subsection{Performance measures for the Brainweb data under different conditions }
\label{sec:supp-brainweb-others}
Figure~\ref{fig:BW_RMSPE_severe} shows a similar set of performance measures as 
in~\ref{fig:BW_RMSPE}, but with ``severe'' MS dataset with various noise percentages. 
\begin{figure}[!t]
	\centering
    \includegraphics[width=0.7\linewidth, page=10]{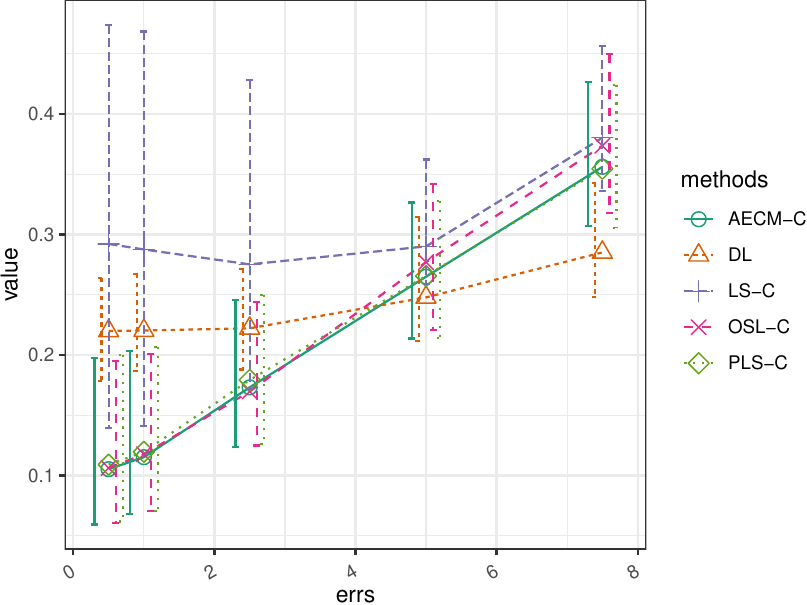}  
	\caption{RMSPE values for different noise of brainweb simulated pictures for different methods for ``severe'' MS dataset.}
	\label{fig:BW_RMSPE_severe}
\end{figure}

We have also evaluated our methods for various INU level, which is presented in the Figure~\ref{fig:BW_RMSPE_INU}. 
\begin{figure}[!t]
	\centering
    \includegraphics[width=0.7\linewidth, page=2]{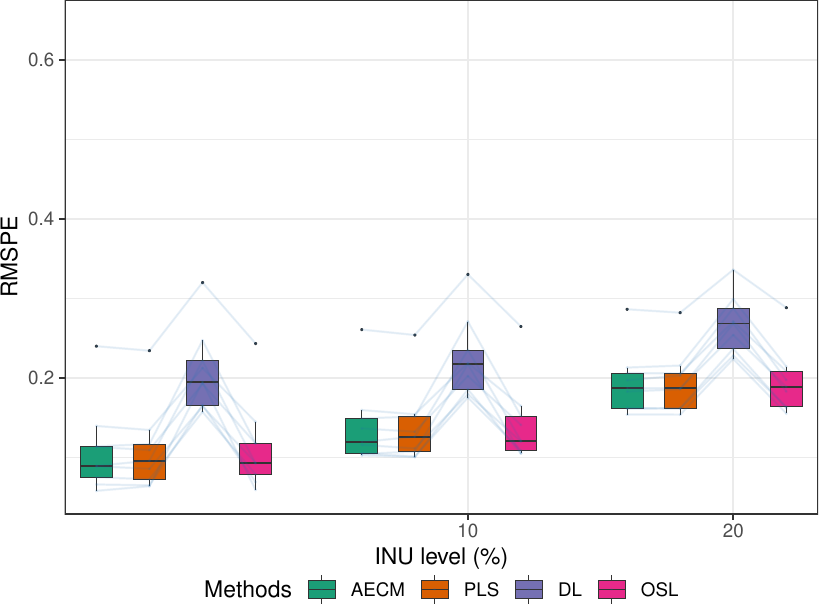}
	\caption{RMSPE values for different noise of brainweb simulated pictures with respect to different INU levels.}
	\label{fig:BW_RMSPE_INU}
\end{figure}


\end{document}